\theoremstyle{plain}
\numberwithin{equation}{section}
\newtheorem{thm}{Theorem}[section]
\newtheorem{lem}[thm]{Lemma}
\newcolumntype{L}{>{$}l<{$}} % math-mode version of "l" column type 2.28.16
\newcolumntype{?}{!{\vrule width 1pt}} % single thicker verticel line
\newcounter{cond}
\newcommand{\integers}{{\mathbb Z}}
\newcommand{\real}{{\mathbb R}}
\newcommand{\positive}{{\mathbb N}}
\newcommand{\complex}{{\mathbb C}}
\newcommand{\cscript}{{\mathcal C}}
\newcommand{\gscript}{{\mathcal G}}
\newcommand{\hscript}{{\mathcal H}}
\newcommand{\mscript}{{\mathcal M}}
\newcommand{\sscript}{{\mathcal S}}
\newcommand{\cscripthat}{\widehat{\cscript}}
\newcommand{\sscripthat}{\widehat{\sscript}}
\newcommand{\overe}{\overline{e}}
\newcommand{\overu}{\overline{u}}
\newcommand{\overpartial}{\overline{\partial}}
\newcommand{\overgamma}{\overline{\gamma}}
\newcommand{\capahat}{\widehat{A}}
\newcommand{\capbhat}{\widehat{B}}
\newcommand{\capkhat}{\widehat{K}}
\newcommand{\capphat}{\widehat{P}}
\newcommand{\capdhat}{\widehat{D}}
\newcommand{\capfhat}{\widehat{F}}
\newcommand{\caphhat}{\widehat{H}}
\newcommand{\bfe}{\mathbf{e}}    % replace \undere
\newcommand{\bff}{\mathbf{f}}       % replace \underf
\newcommand{\bfg}{\mathbf{g}}       % replace \underg
\newcommand{\bfp}{\mathbf{p}}       % replace \underp
\newcommand{\bfx}{\mathbf{x}}       % replace \underx
\newcommand{\bfy}{\mathbf{y}}       % replace \undery
\newcommand{\bfzero}{\mathbf{0}}    % replace \underzerp
\newcommand{\bfsigma}{\mathbf{\sigma}}    % replace \underzerp
\newcommand{\ctimes}{\mathrel{\mathlarger\cdot}}
\newcommand{\ab}[1]{\left|#1\right|}
\newcommand{\doubleab}[1]{\left|\left|#1\right|\right|}
\newcommand{\brac}[1]{\left\{#1\right\}}
\newcommand{\paren}[1]{\left(#1\right)}
\newcommand{\sqbrac}[1]{\left[#1\right]}
\newcommand{\elbows}[1]{{\left\langle#1\right\rangle}}
\newcommand{\ket}[1]{{\left|#1\right>}}
\newcommand{\bra}[1]{{\left<#1\right|}}
\newcommand{\floordel}[1]{\lfloor#1\rfloor}  
\begin{document}
\title{DISCRETE SPACETIME\\QUANTUM FIELD THEORY
}
\author{S. Gudder\\ Department of Mathematics\\
University of Denver\\ Denver, Colorado 80208, U.S.A.\\
sgudder@du.edu
}
\date{}
\maketitle

\begin{abstract}
This paper begins with a theoretical explanation of why spacetime is discrete. The derivation shows that there exists an elementary length which is essentially Planck's length. We then show how the existence of this length affects time dilation in special relativity. We next consider the symmetry group for discrete spacetime. This symmetry group gives a discrete version of the usual Lorentz group. However, it is much simpler and is actually a discrete version of the rotation group. From the form of the symmetry group we deduce a possible explanation for the structure of elementary particle classes. Energy-momentum space is introduced and mass operators are defined. Discrete versions of the Klein-Gordon and Dirac equations are derived. The final section concerns discrete quantum field theory. Interaction Hamiltonians and scattering operators are considered. In particular, we study the scalar spin~0 and spin~1 bosons as well as the spin~$1/2$ fermion cases
\end{abstract}\newpage

\section{Why Is Spacetime Discrete?}  % Section 1
Discreteness of spacetime would follow from the existence of an elementary length. Such a length, which we call a hodon \cite{cro16} would be the smallest nonzero measurable length and all measurable lengths would be integer multiplies of a hodon \cite{bdp16,cro16}. Applying dimensional analysis, Max Planck discovered a fundamental length
\begin{equation*}
\ell _p=\sqrt{\frac{\hbar G}{c^3}}\approx 1.616\times 10^{-33}\hbox{cm}
\end{equation*}
that is the only combination of the three universal physical constants $\hbar$, $G$ and $c$ with the dimension of length. However, it is of interest that $\ell _p$ can be derived using basic physical principles involving gravity, special relativity and quantum mechanics \cite{bdp16,cro16}. Although this derivation is not new, it is worth reviewing. Moreover it introduces the importance of photons for performing length and time measurements which will be needed later. One might argue that employing the above three theories, one is bound to end up with $\ell _p$ because this is the only combination of their characteristic constants with the dimension of length. But the point of this derivation is that $\ell _p$ (or something close to $\ell _p$) is the \textit{smallest} nonzero measurable length. For example, Planck also found the fundamental mass
\begin{equation*}
m_p=\sqrt{\frac{\hbar c}{G}}\approx 2.18\times 10^{-5}\hbox{gm}
\end{equation*}
and nobody has suggested this to be a smallest measurable mass. In fact, all known elementary particles have mass much smaller than $m_p$.

Probably the most accurate method of measuring lengths is by employing photons. Then the question becomes: Is there a minimal wave length for a photon? Let $E$ be an apparatus of mass $m$ that produces photons and for simplicity, we assume that $E$ has a spherical shape with radius $R$. A particle of mass $m'$ and speed $v$ on the surface of $E$ can escape the gravitational attraction of $E$ only if its kinetic energy $\frac{1}{2}m'v^2$ exceeds the gravitational potential energy $Gmm'/R$. This \textit{escape velocity} thus satisfies
\begin{equation*}
\frac{1}{2}v^2=\frac{Gm}{R}
\end{equation*}
The largest possible speed $c$ is that of a photon. We conclude that the smallest possible radius for which a photon can escape from $E$ is
\begin{equation*}
R_S=\frac{2Gm}{c^2}
\end{equation*}
The number $R_S$ (or sometimes $R_S/2$) is called the \textit{Schwarzschild radius} and in general relativity $R_S$ is taken as the radius of a black hole from which most radiation cannot escape.

What is the smallest wavelength $\lambda$ for a photon emitted by $E$? If we convert the mass $m$ of $E$ entirely into energy and assume it becomes a single photon, then by Planck's quantum postulate $mc^2=nh\nu$ where
$n\in\positive$ and $\nu$ is the photon frequency. Since we want the maximal frequency, we let $n=1$ and since
$\nu =c/\lambda$, we have that
\begin{equation*}
mc^2=\frac{hc}{\lambda}
\end{equation*}
Hence, $\lambda = h/mc$ which is called the \textit{Compton wavelength} and is denoted by $\ell _C$. This distance may be described as follows. In determining the position of a particle of mass $m$ to within a Compton wavelength, it requires enough energy to create a particle of mass $m$. If we postulate that this minimal wavelength is approximately the radius of the photon emitter $E$, we have that $\ell _C=R_S$ or in terms of the parameters
\begin{equation*}
\frac{h}{mc}=\frac{2Gm}{c^2}
\end{equation*}
Solving for $m$ gives
\begin{equation}         % equation (1.1)
\label{eq11}
m=\sqrt{\frac{hc}{2G}}
\end{equation}
The only significance of $m$ for us is when we substitute \eqref{eq11} into $\lambda =h/mc$ we obtain
\begin{equation}         % equation (1.2)
\label{eq12}
\lambda =\sqrt{\frac{2hG}{c^3}}=2\sqrt{\pi}\,\ell _p
\end{equation}
Another way of looking at this is the following. For very small masses (elementary particles) we have $R_S\ll\ell_C$ and for very large masses (stars and galaxies) we have $\ell _C\ll R_S$.. For Planck mass \eqref{eq11} which is in between, we obtain $\ell _C=R_S=\lambda$. Let us now accept the existence of an elementary length $\lambda$ called a hodon and examine the consequences. We may take $\lambda$ as given by \eqref{eq12}, but its exact value is not important for our subsequent discussion.

It follows from the existence of $\lambda$ that special relativity must break down at very small distances. This is because special relativity implies that distance measurements depend on the inertial frame in which the measurements are performed. But $\lambda$ is an absolute quantity that would be the same in all inertial frames.

We now examine length measurements more closely. Following Einstein, we assume that lengths are measured using rigid hodon length rods. It is reasonable to assume that the measured distance from point $\bfx$ to point $\bfy$ is the maximal number of hodon rods that can be placed end to end from $\bfx$ to $\bfy$. Letting $\positive _0=\brac{0,1,2,\ldots}$, if $\bfx ,\bfy\in\real ^3$, we define the \textit{discrete distance} in hodons from
$\bfx$ to $\bfy$ to be
\begin{equation*}
d(\bfx ,\bfy )=\max\brac{n\lambda\colon n\in\positive _0,n\lambda\le\doubleab{\bfx -\bfy}_3}
\end{equation*}
where $\doubleab{\bfx -\bfy}_3$ is the usual Euclidean continuum distance from $\bfx$ to $\bfy$. It follows from the definition of $d(\bfx ,\bfy )$ that
\begin{equation*}
\doubleab{\bfx -\bfy}_3<d(\bfx ,\bfy )+\lambda
\end{equation*}
Hence, $\ab{\doubleab{\bfx -\bfy}_3-d(\bfx ,\bfy )}<\lambda$ so these two distances differ by a negligible amount except for very small distance scales.

For discrete distance measurements, we have an altered Pythagoras theorem. If the two legs of a right triangle each have length one hodon, then the hypotenuse has length  one hodon. The next table summarizes such results where all the distances are the given integer multiples of a hodon.
\medskip
\begin{center}
\begin{tabular}{c ? c|c|c|c|c|c|c|c|c|c|c|c|c}
Length&\multicolumn{10}{c|}{}\\
of Legs&1&2&3&4&5&6&7&8&9&10\\
\hline
Length of&&&&&&&&&&\\
Hypotenuse&1&2&4&5&7&8&9&11&12&14\\
\hline\noalign{\medskip}
\multicolumn{11}{c}{\textbf{Table 1 (Altered Pythagoras Theorem)}}\\
\end{tabular}
\end{center}
\vskip 1pc

\noindent Again the discrete hypotenuse length differs from the continuum length by less than one hodon.

Corresponding to the elementary length $\lambda$, we have the elementary time $\lambda/c$ equal to one \textit{chronon}. Of course, one chronon is the time it takes for a photon to travel a distance of one hodon. We can measure time very accurately by using a photon clock (Einstein called them light clocks). A photon clock consists of a photon emitter-detector positioned at a point $A$ and another photon emitter-detector at point $B$. The measured discrete distance from $A$ to $B$ is $s$ hodons. We call $s$ the \textit{size} of the clock and we shall see that the clock is most accurate when $s=1$ and decreases in accuracy as $s$ gets larger. To measure time, a computer registers a ``tick'' as a photon is emitted at $A$ toward $B$. When the photon is detected at $B$, another ``tick'' (or should we say ``tock'') is registered and a new photon is emitted toward $A$. This can be repeated indefinitely. Between two ticks of the clock, a time of $s$ chronons elapses. We can determine the elapsed time for a process in multiples of a chronon by counting the number of ticks. Clearly, the clock is more accurate for smaller $s$.

Even though special relativity breaks down at distances near $\lambda$, it still holds in an altered form \cite{bdp16}. Let us consider time dilation in the discrete setting. We still follow Einstein except our rigid measuring rods are one hodon in length and our clocks measure time in chronons. Suppose a stationary observer $O_1$ possesses a photon clock of size $s$ and notes an elapsed time of $s$ chronons. Another observer $O_2$ also has a photon clock of size $s$ but $O_2$ and her clock are on a train moving at a constant speed of $v$ hodons per chronon. Of course, $v$ is a rational number with $0\le v\le 1$. Before $O_2$ boards the train, the two clocks are synchronized so that they both tick at the same time. Observer $O_2$, now on the moving train, checks her photon clock to measure her elapsed time $\Delta t'$ corresponding to $O_1$'s $\Delta t=s$. Now $O_2$'s clock has similar points $A'$, $B'$ a discrete distance of $s$ hodons apart. We assume that the line $A'B'$ is perpendicular to the direction of motion of the train. From the time that the photon is emitted from $A'$ and is detected at $B'$ in the moving train, point $A'$ has moved a horizontal distance of $x=v\Delta t'$. Letting $d$ be the total distance in hodons traveled by $O_2$'s photon during time $\Delta t'$, we have by Pythagoras' theorem that
\begin{equation*}
d^2=x^2+(\Delta t)^2=v^2(\Delta t')^2+(\Delta t)^2
\end{equation*}
Since $\Delta t'=d$ in chronons, we conclude that
\begin{equation*}
(\Delta t')^2=v^2(\Delta t')^2+(\Delta t)^2
\end{equation*}
Hence,
\begin{equation*}
\Delta t'=\frac{\Delta t}{\sqrt{1-v^2}}=\frac{s}{\sqrt{1-v^2}}
\end{equation*}

This is the same formula for time dilation as in special relativity except now $1/\sqrt{1-v^2}$ must be computed discretely so that $\Delta t'$ is an integer. We then ``round down'' and replace $1/\sqrt{1-v^2}$ by its integer part
$\floordel{(1-v^2)^{-1/2}}$ as we did for distance measurements. For $n=s,s+1,s+2,\ldots$ we have that $\Delta t'=n$ if and only if
\begin{equation}         % equation (1.3)
\label{eq13}
n\le\frac{s}{\sqrt{1-v^2}}<n+1
\end{equation}
Now \eqref{eq13} is equivalent to
\begin{equation}         % equation (1.4)
\label{eq14}
\frac{\sqrt{n^2-s^2}}{n}\le v<\frac{\sqrt{(n+1)^2-s^2}}{n+1}
\end{equation}
For example, suppose we have the most accurate possible photon clock with $s=1$. Then $\Delta t=1$ and
$\Delta t'=1$ if and only if $0\le v<\sqrt{3}/2$ or $0\le v<0.866$. In this case, there is no observed time dilation except when $v$ is quite large (that is, close to 1). 
For $\Delta t'=2$ we would have the speed $\sqrt{3}/2\le v<\sqrt{8}/3$ or $0.866\le v <0.943$. We call $\sqrt{3}/2$ the \textit{threshold speed} and $\sqrt{8}/3$ the \textit{first step speed}. We define higher steps analogously. In the high accuracy case, the threshold speed is large and the steps are relatively large. For lower accuracy clocks, in which $s$ is larger, the threshold is smaller and so are the steps. As $s\to\infty$, the continuum limit of the usual special relativity is approached. A similar analysis holds for space contraction
\begin{equation*}
\Delta x'=\Delta x\sqrt{1-v^2}
\end{equation*}

We conclude that special relativity is not only relative to the observer's motion but also relative to the accuracy of the observer's measurements. Table~2 summarizes speeds to the third step for clock sizes for $s=1,10,100$ and illustrates the approach to the smooth time dilation of special relativity as $s\to\infty$. This phenomenon may give a method for testing the discreteness of spacetime experimentally. Employing very accurate clocks moving at various speeds, one should observe whether their time dilations change at abrupt (albeit, very small) speed steps.
\medskip

\begin{center}
\begin{tabular}{|c|c|c|c|c|}
Size&Threshold&First Step&Second Step&Third Step\\
\hline
1&0.866&0.943&0.968&0.980\\
\hline
10&0.417&0.553&0.639&0.670\\
\hline
100&0.140&0.197&0.240&0.275\\
\hline\noalign{\medskip}
\multicolumn{5}{c}{\textbf{Table 2 (Size and Speeds)}}\\
\end{tabular}
\end{center}
\vskip 2pc

\section{Discrete Symmetry Group} % Section 2
The simplest form for a discrete spacetime is a cubic lattice $\sscript _4=\integers ^+\times\integers ^3$ where $\integers ^+=\brac{0,1,2,\ldots}$ represents time and
\begin{equation*}
\integers =\brac{\ldots ,-2,-1,0,1,2,\ldots}
\end{equation*}
represents a space coordinate. As before, the time units are chronons and the length units are hodons. We represent an element of $\sscript _4$ by
\begin{equation*}
x=(x_0,\bfx )=(x_0,x_1,x_2,x_3)
\end{equation*}
and employ the Minkowski distance
\begin{equation*}
\doubleab{x}_4^2=x_0^2-\doubleab{\bfx}_3^2=x_0^2-x_1^2-x_2^2-x_3^2
\end{equation*}
A \textit{symmetry} on $\sscript _4$ is a linear transformation ($4\times 4$ real matrix) $A\colon\sscript _4\to\sscript _4$ that satisfies
$\doubleab{Ax}_4^2=\doubleab{x}_4^2$ for all $x\in\sscript _4$. The set of symmetries forms a group which we denote by $\gscript _4$.

Since $A\colon\sscript_4\to\sscript _4$, the entries in $A$ must be integers. We have shown in \cite{gud16} that this forces $A$ to have the form
\begin{equation}         % equation (2.1)
\label{eq21}
A=\begin{bmatrix}\noalign{\smallskip}
1&0&0&0\\0&&&\\0&&B&\\0&&&\\\noalign{\smallskip}\end{bmatrix}
\end{equation}
where $B\colon\integers ^3\to\integers ^3$ is an orthogonal $3\times 3$ matrix. We now motivate why this happens with a lower dimensional example. Consider a 2-dimensional discrete spacetime $\sscript _2=\integers ^+\times\integers$ with $\doubleab{x}_2^2=x_0^2-x_1^2$. It is not hard to show that if $A\colon\real ^2\to\real ^2$ is a real matrix such that $\doubleab{Ax}_2^2=\doubleab{x}_2^2$ for all $x\in\real ^2$, then $A$ has the form
\begin{equation}         % equation (2.2)
\label{eq22}
A=\begin{bmatrix}\noalign{\smallskip}
\pm\sqrt{1+a^2}&a\\\noalign{\smallskip}a&\pm\sqrt{1+a^2}\\\noalign{\smallskip}\end{bmatrix}
\end{equation}
for some $a\in\real$. If $a\ne  0 $, this is called a \textit{velocity boost}. We can check this by applying $A$ given in \eqref{eq22} with plus signs to obtain
\begin{equation*}
A=\begin{bmatrix}
c\\\noalign{\smallskip}d\end{bmatrix}
=\begin{bmatrix}\noalign{\smallskip}
\sqrt{1+a^2}\,c+ad\\\noalign{\smallskip}ac+\sqrt{1+a^2}\,d\\\noalign{\smallskip}\end{bmatrix}
=\begin{bmatrix}
c'\\\noalign{\smallskip}d'\end{bmatrix}
\end{equation*}
Then
\begin{align*}
\doubleab{\,\begin{bmatrix}c'\\\noalign{\smallskip}d'\end{bmatrix}\,}_2^2
&=\paren{\sqrt{1+a^2}\,c+ad}^2-\paren{ac+\sqrt{1+a^2}\,d}^2\\
&=c^2-d^2=\doubleab{\,\begin{bmatrix}c\\\noalign{\smallskip}d\end{bmatrix}\,}_2^2
\end{align*}
Now if $A\colon\sscript _2\to\sscript _2$, then $A$ must have integers entries. But the only $a\in\integers$ with $\sqrt{1+a^2}\in\integers$ is $a=0$. We conclude that the symmetry group on $\sscript _2$ consists of only two elements
\begin{equation*}
\begin{bmatrix}
1&0\\\noalign{\smallskip}0&1\end{bmatrix},\quad
\begin{bmatrix}1&0\\\noalign{\smallskip}0&-1\end{bmatrix}
\end{equation*}
We shall show later that $\gscript _4$ has order 48.

We denote the group of orthogonal $3\time 3$ matrices $B\colon\integers ^3\to\integers ^3$ by $O_3$ so that
$\gscript _4=1\times O_3$ where 1 is the 1-dimensional identity. It is also useful to consider the group of discrete rotations $O'_3$ that consists of the elements of $O_3$ with determinant 1. A simple element of $O'_3$ is the rotation of $\pi/2$ about the $x_1$ axis. This rotation has matrix
\begin{equation*}
A=\begin{bmatrix}
1&0&0\\\noalign{\smallskip}0&0&-1\\\noalign{\smallskip}0&1&0\end{bmatrix}
\end{equation*}
We also have rotations of $\pi/2$ about the $x_2$ and $x_3$ axes. These are given by
\begin{equation*}
B=\begin{bmatrix}
0&0&1\\\noalign{\smallskip}0&1&0\\\noalign{\smallskip}-1&0&0\end{bmatrix},\quad
C=\begin{bmatrix}
0&-1&0\\\noalign{\smallskip}1&0&0\\\noalign{\smallskip}0&0&1\end{bmatrix}\end{equation*}
Any two of these three matrices generate the entire discrete rotation group $O'_3$. This group has order 24 and its elements besides $A,B,C$ are presented below. The group table for $O'_3$ is given in Table~3.

\newpage
\begin{align*}
&\hskip -10pt D=\!\begin{bmatrix}\noalign{\smallskip}
1&0&0\\0&-1&0\\0&0&-1\\\noalign{\smallskip}\end{bmatrix}\!,\ 
E=\!\begin{bmatrix}\noalign{\smallskip}
1&0&0\\0&0&1\\0&-1&0\\\noalign{\smallskip}\end{bmatrix}\!,\ 
F=\!\begin{bmatrix}\noalign{\smallskip}
-1&0&0\\0&1&0\\0&0&-1\\\noalign{\smallskip}\end{bmatrix}\!,\ 
G=\!\begin{bmatrix}\noalign{\smallskip}
0&0&-1\\0&1&0\\1&0&0\\\noalign{\smallskip}\end{bmatrix}\\\noalign{\medskip}
&\hskip -10pt H=\begin{bmatrix}\noalign{\smallskip}
-1&0&0\\0&-1&0\\0&0&1\\\noalign{\smallskip}\end{bmatrix},\ 
I=\begin{bmatrix}\noalign{\smallskip}
1&0&0\\0&1&0\\0&0&1\\\noalign{\smallskip}\end{bmatrix},\ 
J=\begin{bmatrix}\noalign{\smallskip}
0&1&0\\-1&0&0\\0&0&1\\\noalign{\smallskip}\end{bmatrix},\ 
K=\begin{bmatrix}\noalign{\smallskip}
0&0&1\\1&0&0\\0&1&0\\\noalign{\smallskip}\end{bmatrix}\\\noalign{\medskip}
&\hskip -10pt L=\begin{bmatrix}\noalign{\smallskip}
0&1&0\\0&0&-1\\-1&0&0\\\noalign{\smallskip}\end{bmatrix}\!,\ 
M\!=\!\begin{bmatrix}\noalign{\smallskip}
0&-1&0\\0&0&-1\\1&0&0\\\noalign{\smallskip}\end{bmatrix}\!,\ 
N\!=\!\begin{bmatrix}\noalign{\smallskip}
0&-1&0\\0&0&1\\-1&0&0\\\noalign{\smallskip}\end{bmatrix}\!,\ 
O\!=\!\begin{bmatrix}\noalign{\smallskip}
-1&0&0\\0&0&1\\0&1&0\\\noalign{\smallskip}\end{bmatrix}\\\noalign{\medskip}
&\hskip -10pt P\!=\!\begin{bmatrix}\noalign{\smallskip}
-1&0&0\\0&0&-1\\0&-1&0\\\noalign{\smallskip}\end{bmatrix},\ 
Q\!=\!\begin{bmatrix}\noalign{\smallskip}
0&0&-1\\-1&0&0\\0&1&0\\\noalign{\smallskip}\end{bmatrix},\ 
R\!=\!\begin{bmatrix}\noalign{\smallskip}
0&0&1\\0&-1&0\\1&0&0\\\noalign{\smallskip}\end{bmatrix},\ 
S\!=\!\begin{bmatrix}\noalign{\smallskip}
0&1&0\\1&0&0\\0&0&-1\\\noalign{\smallskip}\end{bmatrix}\\\noalign{\medskip}
&\hskip -10pt T\!=\!\!\begin{bmatrix}\noalign{\smallskip}
0&0&-1\\0&-1&0\\-1&0&0\\\noalign{\smallskip}\end{bmatrix}\!,\ 
U\!=\!\begin{bmatrix}\noalign{\smallskip}
0&-1&0\\-1&0&0\\0&0&-1\\\noalign{\smallskip}\end{bmatrix}\!,\ 
V\!=\!\begin{bmatrix}\noalign{\smallskip}
0&0&1\\-1&0&0\\0&-1&0\\\noalign{\smallskip}\end{bmatrix}\!,\ 
W\!=\!\!\begin{bmatrix}\noalign{\smallskip}
0&1&0\\0&0&1\\1&0&0\\\noalign{\smallskip}\end{bmatrix}\\\noalign{\medskip}
&\hskip -10pt X\!=\!\begin{bmatrix}\noalign{\smallskip}
0&0&-1\\1&0&0\\0&-1&0\\\noalign{\smallskip}\end{bmatrix}\\
\end{align*}
\newpage

{\parindent=-75pt
\begin{tabular}{|L|L|L|L|L|L|L|L|L|L|L|L|L|L|L|L|L|L|L|L|L|L|L|L|L|L|L|L|}
&I&A&B&C&D&E&F&G&H&J&K&L&M&N&O&P&Q&R&S&T&U&V&W&X\\
\hline
I&I&A&B&C&D&E&F&G&H&J&K&L&M&N&O&P&Q&R&S&T&U&V&W&X\\
\hline
A&A&D&K&M&E&I&O&Q&P&L&R&S&U&C&H&F&T&V&W&X&N&B&J&G\\
\hline
B&B&L&F&K&T&N&G&I&R&V&S&P&A&O&W&M&J&D&X&H&Q&U&E&C\\
\hline
C&C&K&N&H&S&X&U&M&J&I&O&B&R&T&Q&V&A&W&F&L&D&E&G&P\\
\hline
D&D&E&R&U&I&A&H&T&F&S&V&W&N&M&P&O&X&B&J&G&C&K&L&Q\\
\hline
E&E&I&V&N&A&D&P&X&O&W&B&J&C&U&F&H&G&K&L&Q&M&R&S&T\\
\hline
F&F&P&G&S&H&O&I&B&D&U&X&M&L&W&E&A&V&T&C&R&J&Q&N&K\\
\hline
G&G&M&I&X&R&W&B&F&T&Q&C&A&P&E&N&L&U&D&K&D&V&J&O&S\\
\hline
H&H&O&T&J&F&P&D&R&I&C&Q&N&W&L&A&E&K&G&U&B&S&X&M&V\\
\hline
J&J&Q&L&I&U&V&S&W&C&H&A&T&G&B&K&X&O&M&D&N&F&P&R&E\\
\hline
K&K&S&O&R&X&C&Q&A&B&V&W&F&D&H&J&U&L&E&G&P&T&N&I&M\\
\hline
L&L&T&S&A&N&B&W&J&M&P&D&X&Q&K&R&G&H&U&E&C&O&F&V&I\\
\hline
M&M&R&C&P&W&G&N&U&L&A&H&K&V&X&T&B&D&J&O&S&E&I&Q&F\\
\hline
N&N&B&U&O&L&T&M&C&W&E&F&V&K&Q&G&R&I&S&P&J&A&D&X&H\\
\hline
O&O&F&Q&W&P&H&A&K&E&N&G&U&S&J&I&D&B&X&M&V&L&T&C&R\\
\hline
P&P&H&X&L&O&F&E&V&A&M&T&C&J&S&D&I&R&Q&N&K&W&G&U&B\\
\hline
Q&Q&U&A&G&V&J&K&O&X&T&M&D&F&I&C&S&N&P&R&E&B&L&H&W\\
\hline
R&R&W&H&V&G&M&T&D&B&K&J&O&E&P&L&N&S&I&Q&F&X&C&A&U\\
\hline
S&S&X&W&D&C&K&J&L&U&F&E&G&T&R&V&Q&P&N&I&M&H&O&B&A\\
\hline
T&T&N&D&Q&B&L&R&H&G&X&U&E&O&A&M&W&C&F&V&I&K&S&P&J\\
\hline
U&U&V&M&F&J&Q&C&N&S&D&P&R&B&G&X&K&E&L&H&W&I&A&T&O\\
\hline
V&V&J&P&B&Q&U&X&E&K&R&L&H&I&F&S&C&W&A&T&O&G&M&D&N\\
\hline
W&W&G&J&E&M&R&L&S&N&O&I&Q&X&V&B&T&F&C&A&U&P&H&K&D\\
\hline
X&X&C&E&T&K&S&V&P&Q&G&N&I&H&D&U&J&M&O&B&A&R&W&F&L\\
\hline\noalign{\medskip}
\multicolumn{24}{c}{\textbf{Table 3 (The Group $O'_3$)}}\\
\end{tabular}
\parindent=18pt}
\vskip 3pc

It is not hard to show that the full discrete orthogonal group $O_3$ has the form $O_3=O'_3\cup (-O'_3)$. Where
\begin{equation*}
-O'_3=\brac{-Z\colon Z\in O'_3}
\end{equation*}
are the elements of $O_3$ with determinant $-1$. It follows that $O'_3$ is a subgroup of $O_3$ and that the order of $O_3$ is 48.

\begin{lem}       % Lemma 2.1
\label{lem21}
If $\gscript$ is a subgroup of $O'_3$, then $\gscript\cup (-\gscript )$ is a subgroup of $O_3$.
\end{lem}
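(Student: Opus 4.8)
The plan is to verify the three subgroup criteria for $H=\gscript\cup(-\gscript)$ directly: that it contains the identity, is closed under matrix multiplication, and is closed under inverses. First I would record the ambient facts that make everything work. Since $\gscript\subseteq O'_3\subseteq O_3$ and, by the decomposition $O_3=O'_3\cup(-O'_3)$ established above, also $-\gscript\subseteq O_3$, the set $H$ already sits inside $O_3$; so it suffices to check that $H$ is closed under the group operations. The key algebraic observation is that $-I$ is a scalar matrix and hence central: for any $3\times 3$ matrices $A,B$ one has $-A=(-I)A=A(-I)$, $(-I)^2=I$, and consequently $(-A)(-B)=AB$, $A(-B)=(-A)B=-(AB)$, and $(-A)^{-1}=-A^{-1}$.

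Containment of the identity is immediate: since $\gscript$ is a subgroup, $I\in\gscript\subseteq H$. For closure under products I would take two elements of $H$ and run through the four sign combinations. Writing each factor as $\pm g$ with $g\in\gscript$, the sign rules above collapse every product to $\pm(g_1g_2)$; because $g_1g_2\in\gscript$ (as $\gscript$ is a subgroup), the product lands in $\gscript$ or in $-\gscript$, hence in $H$. Closure under inverses is the same kind of short two-case check: $g^{-1}\in\gscript$ handles elements of $\gscript$, and $(-g)^{-1}=-g^{-1}\in-\gscript$ handles the rest.

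I do not anticipate a genuine obstacle here; the content is bookkeeping around signs, with the centrality of $-I$ doing all the work. The one point worth stating cleanly, and the source of the most economical write-up, is that $\brac{I,-I}$ is a central (hence normal) subgroup of $O_3$, so that $H=\gscript\ctimes\brac{I,-I}$ is the product of a subgroup with a normal subgroup and is therefore automatically a subgroup of $O_3$. In the final version I would either invoke that standard fact directly or, to keep the argument self-contained, present the explicit sign-case verification sketched above.
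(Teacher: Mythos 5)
Your proposal is correct and takes essentially the same route as the paper: the paper's proof likewise checks closure under products by running through the sign combinations $YZ$, $(-Y)(-Z)$, $Y(-Z)$, $(-Y)Z$ and then checks closure under inverses via $(-Y)^{-1}=-Y^{-1}$. Your added observation that $\brac{I,-I}$ is central (so the union is the product of $\gscript$ with a normal subgroup) is a valid, slightly slicker packaging of the same bookkeeping, but the substance is identical.
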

\begin{proof}
If $Y,Z\in\gscript$, then $YZ$ and $(-Y)(-Z)$ are in $\gscript$. Also, $Y(-Z)=-YZ$ and $(-Y)Z=-YZ$ are in $-\gscript$. Thus, $\gscript\cup (-\gscript )$ is closed under the group product. Moreover, if $Y\in\gscript$, then $Y^{-1}\in\gscript$ and if $-Y\in -\gscript$ then $(-Y)^{-1}=-Y^{-1}\in -\gscript$. Hence, $\gscript\cup(-\gscript )$ is closed under inverses so
$\gscript\cup (-\gscript )$ is a group.
\end{proof}

Not all subgroups $\gscript$ of $O_3$ have the form $\gscript\subseteq O'_3$ or $G=\hscript\cup (-\hscript )$ where $\hscript$ is a subgroup of $O'_3$. For example, since $P^2=I$ we have that $\brac{I,-P}$ is a subgroup of $O_3$ which is not of the above form.

We now examine the eigenvalue-eigenvector structure of the elements of $O'_3$. The eigenvalues of $A$ are $1,i,-i$ with corresponding normalized eigenvectors
\begin{equation*}
\begin{bmatrix}1\\0\\0\end{bmatrix},\quad
\frac{1}{\sqrt{2}}\begin{bmatrix}0\\1\\-i\end{bmatrix},\quad
\frac{1}{\sqrt{2}}\begin{bmatrix}0\\1\\i\end{bmatrix}
\end{equation*}
respectively. The projections onto these eigenvectors are
\begin{equation*}
P(1)=\begin{bmatrix}
1&0&0\\\noalign{\smallskip}0&0&0\\\noalign{\smallskip}0&0&0\end{bmatrix},\quad
P(i)=\frac{1}{2}\begin{bmatrix}0&0&0\\\noalign{\smallskip}0&1&i\\\noalign{\smallskip}0&-i&1\end{bmatrix},\quad
P(-i)=\frac{1}{2}\begin{bmatrix}0&0&0\\\noalign{\smallskip}0&1&-i\\\noalign{\smallskip}0&i&1\end{bmatrix}
\end{equation*}
respectively. Now $A$ has a self-adjoint \textit{momentum operator} $\capahat$ satisfying\newline
$A=e^{i\capahat}$. It follows that $\capahat =-i\ln A$ so $\capahat$ has eigenvalues $0,\pi/2,-\pi /2$. We conclude that
\begin{equation*}
\capahat =0P(1)+\frac{\pi}{2}\,P(i)-\frac{\pi}{2}\,P(-i)=i\,\frac{\pi}{2}
\begin{bmatrix}0&0&0\\\noalign{\smallskip}0&0&1\\\noalign{\smallskip}0&-1&0\end{bmatrix}
\end{equation*}
All the elements of the \textit{lepton set} $\brac{A,B,C,E,G,J}\subseteq O'_3$ have similar eigenvalue-eigenvector structure (we explain this terminology later). In particular, they have the eigenvalues $1,i,-i$ with slightly different eigenvectors. For example, the corresponding eigenvectors for $B$ are
\begin{align*}
\begin{bmatrix}0\\1\\0\end{bmatrix},\quad
\frac{1}{\sqrt{2}}\begin{bmatrix}-i\\0\\1\end{bmatrix}.\quad
\frac{1}{\sqrt{2}}\begin{bmatrix}i\\0\\1\end{bmatrix}
\intertext{and}
\capbhat =i\,\frac{\pi}{2}
\begin{bmatrix}0&0&-1\\\noalign{\smallskip}0&0&0\\\noalign{\smallskip}1&0&0\end{bmatrix}
\end{align*}

The eigenvalues of $K$ are $1,e^{i2\pi/3},e^{-i2\pi /3}$ and the corresponding eigenvectors are
\begin{equation*}
\frac{1}{\sqrt{3}}\begin{bmatrix}1\\1\\1\end{bmatrix},\quad
\frac{1}{\sqrt{3}}\begin{bmatrix}1\\e^{-i2\pi /3}\\e^{i2\pi /3}\end{bmatrix}.\quad
\frac{1}{\sqrt{3}}\begin{bmatrix}1\\e^{i2\pi /3}\\e^{-i2\pi /3}\end{bmatrix}
\end{equation*}
The momentum operator has eigenvalues $0,-\frac{2\pi}{3},\frac{2\pi}{3}$ and we have
\begin{equation*}
\capkhat =\frac{2\pi}{9}
\begin{bmatrix}0&1&1\\\noalign{\smallskip}1&0&1\\\noalign{\smallskip}1&1&0\end{bmatrix}
\end{equation*}
Again the eigenvalues for the elements of the \textit{gluon set} $\brac{K,L,M,N,Q,V,W,X}$ are the same as for $K$ and their eigenvectors and momentum operators are similar.

The eigenvalues of $P$ are $1,-1,-1$ with corresponding eigenvectors
\begin{equation*}
\frac{1}{\sqrt{2}}\begin{bmatrix}0\\1\\-1\end{bmatrix},\quad
\frac{1}{\sqrt{3}}\begin{bmatrix}1\\1\\1\end{bmatrix}.\quad
\frac{1}{\sqrt{6}}\begin{bmatrix}-2\\1\\1\end{bmatrix}
\end{equation*}
The momentum operator has eigenvalues $0,\pi ,\pi$ and we have
\begin{equation*}
\capphat =\frac{\pi}{2}
\begin{bmatrix}2&0&0\\\noalign{\smallskip}0&1&1\\\noalign{\smallskip}0&1&1\end{bmatrix}
\end{equation*}
As before, the eigenvalues are the same for all elements of the \textit{quark set} $\brac{O,P,R,S,T,U}$ and their eigenvectors and momentum operators are similar.

We are left with the \textit{boson set} $\brac{D,F,H}$. These are the only matrices in $O'_3$ that are diagonal
(except for I). By inspection, they have eigenvalues $1,-1,-1$ and eigenvectors
\begin{equation*}
\begin{bmatrix}1\\0\\0\end{bmatrix},\quad
\begin{bmatrix}0\\1\\0\end{bmatrix}.\quad
\begin{bmatrix}0\\0\\1\end{bmatrix}
\end{equation*}
is some order. The momentum operator for $D$ is
\begin{equation*}
\capdhat =\pi
\begin{bmatrix}0&0&0\\\noalign{\smallskip}0&1&0\\\noalign{\smallskip}0&0&1\end{bmatrix}
\end{equation*}
and $\capfhat$, $\caphhat$ are similar.

The group $O'_3$ can be represented as the physical symmetries of a cube which we denote by $Q'_3$. We can describe the elements of $Q'_3$ by placing a spindle through the center of the cube. There are three possible ways for a spindle to give a symmetry.
\begin{list} {\arabic{cond}.}{\usecounter{cond}
\setlength{\rightmargin}{\leftmargin}}
% 1.
\item The spindle pierces the cube at the center of two opposite faces. There are three pairs of opposite faces and three nontrivial rotations by $\pi /2$, $\pi$ and $3\pi /2$. This gives 9 symmetries.
% 2.
\item The spindle pierces the center between two opposite edges. There are 12 edges and six pairs of opposite edges. There is one nontrivial rotation by $\pi$. This gives six symmetries.
% 3.
\item The spindle pierces two opposite corners (vertices). There are 8 vertices and four pairs of opposite vertices. There are two nontrivial rotations by $2\pi /3$ and $4\pi /3$. This gives 8 symmetries.
\end{list}

Including the identity, we obtain a total of 24 physical symmetries of a cube. Figure~1 illustrates a cube with its vertices labeled.
% Figure 8.1
\begin{center}
\includegraphics[scale=.75]{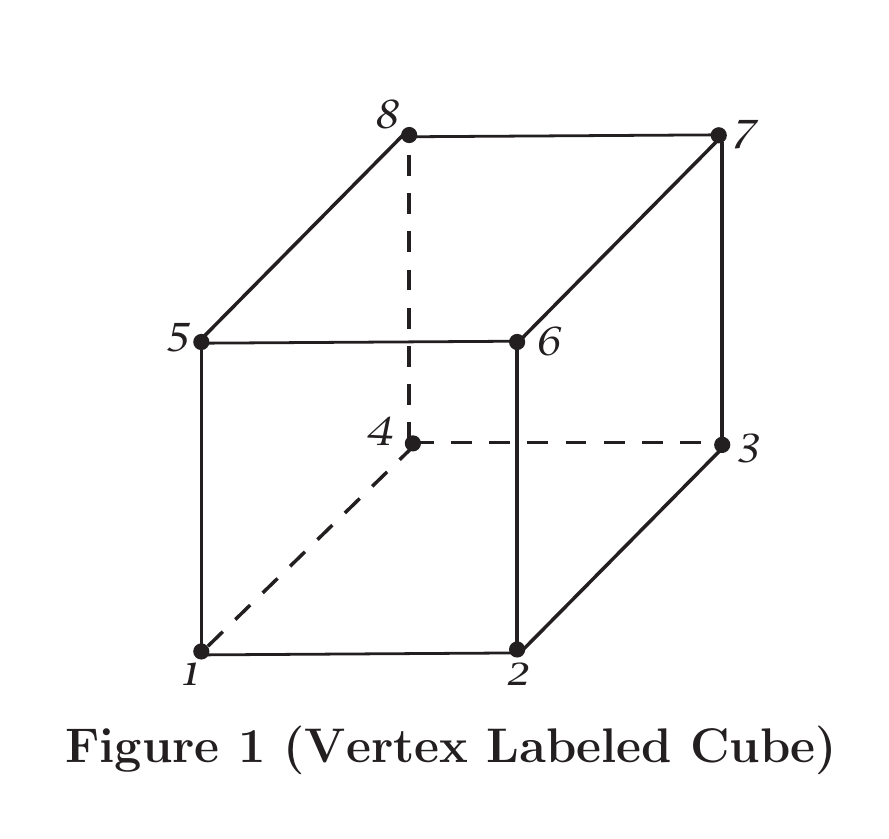}
\end{center}
We can describe the symmetries of a cube in terms of the permutations of its vertices. These are given as follows.
\begin{align*}
a&=(1265)(3784),\quad b=(1485)(2376),\quad c=(1234)(5678)\\
d&=(16)(25)(38)(47),\quad e=(1562)(34787),\quad f=(18)(27)(36)(45)\\
g&=(1584)(2673),\quad h=(13)(24)(57)(68),\quad i=(1)(2)(3)(4)(5)(6)(7)(8)\\
j&=(1432)(5876),\quad k=(138)(275)(4)(6),\quad l=(136)(475)(2)(8)\\
m&=(168)(274)(3)(5),\quad n=(245)(638)(1)(7),\quad o=(14)(28)(35)(67)\\
p&=(17)(23)(46)(58),\quad q=(254)(368)(1)(7),\quad r=(17)(28)(34)(56)\\
s&=(17)(26)(35)(48),\quad t=(12)(35)(46)(78),\quad u=(15)(28)(33)(46)\\
v&=(186)(247)(3)(5),\quad w=(183)(257)(4)(6),\quad x=(163)(457)(2)(8)
\end{align*}
The groups $O'_3$ and $Q'_3$ are isomorphic with isomorphism given by $A\to a,B\to b,\ldots ,X\to x$. We can divide the elements of $Q'_3$ into three types according to their spindle classification given above.

% NOTE The symmetries $k,l,m,n,q$, $v,w,x$ from $k,l,m,n,q,v,w,x$ to elevate line overfull
We see that $a,b,c,e,g,j$ each have order four and they are type~1. Also, $d,f,h$ each have order two and they are type~1. The symmetries $k,l,m,n,q$, $v,w,x$ have order three and type~3. Finally, $o,p,r,s,t,u$ have order two and type~2. This classification of similar elements of $Q'_3$ and the standard classification of elementary particles seems more than a coincidence. We suggest that $\brac{a,b,c,e,g,j}$ correspond to the six leptons; $\brac{o,p,r,s,t,u}$ correspond to the six quarks; $\brac{k,l,m,n,q,v,w,x}$ correspond to the eight gluons and $\brac{d,f,h}$ correspond to the two weak bosons and the Higgs boson. The anti-particles of these particles are given by the rest of the group elements in $Q_3$ that we now discuss.

We have viewed the group $O'_3$ in two ways. One way as a group of $3\times 3$ orthogonal matrices with determinant 1 and the other as the group $Q'_3$ of physical rotations of a cube. Of course, the two views are equivalent and the groups are isomorphic. But then we extended $O'_3$ to the group $O_3=O'_3\cup\brac{-Q'_3}$

\section{\!\!Discrete Spacetime and Energy-Momentum} % Section 3
Our basic assumption is that spacetime is discrete and has the form of a 4-dimensional cubic lattice $\sscript _4=\positive ^+\times\positive ^3$ \cite{gud16,gud17}. We regard $\sscript _4$ as a framework or scaffolding in which vertices represent tiny cells of Planck scale that may or may not be occupied by a particle. The edges connecting vertices represent directions in which particles can propagate. Note that $\sscript _4$ is a module in the sense that $\sscript _4$ is closed under addition and multiplication by elements of $\positive ^+$. The vectors
\begin{align*}
d&=(1,\bfzero )=(1,0,0,0)\\
e&=(0,\bfe )=(0,1,0,0)\\
f&=(0,\bff )=(0,0,1,0)\\
g&=(0,\bfg )=(0,0,0,1)
\end{align*}
form a basis for $\sscript _4$ and every element of $\sscript _4$ has the unique form
\begin{equation*}
x=nd+me+pf+qg
\end{equation*}
$n\in\positive ^+$, $m,p,q\in\positive$. As usual, $x,y\in\sscript _4$ are \textit{time-like separated} if
$\doubleab{x-y}_4^2\ge 0$ and \textit{space-like separated} if $\doubleab{x-y}_4^2<0$ and we are using units in which the speed of light is 1.

The dual of $\sscript _4$ is denoted by $\sscripthat _4$. We regard $\sscripthat _4$ as having the identical structure as
$\sscript _4$ and that $\sscripthat _4$ again has basis $d,e,f,g$. The only difference is that we denote elements of
$\sscripthat _4$ by 
\begin{equation*}
p=(p_0,\bfp)=(p_0,p_1,p_2,p_3)
\end{equation*}
and interpret $p$ as the energy-momentum vector for a particle. In fact, we sometimes call $p\in\sscripthat _4$ a particle. Moreover, we only consider the forward cone
\begin{equation*}
\cscripthat ^+(0)=\brac{p\in\sscripthat _4\colon\doubleab{p}_4\ge 0}
\end{equation*}
in $\sscripthat _4$. For a particle $p\in\sscripthat _4$ we call $p_0\ge 0$ the \textit{total energy},
$\doubleab{\bfp}_3\ge 0$ the \textit{kinetic energy} and $m=\doubleab{p}_4\ge 0$ the \textit{mass} of $p$. The integers $p_1,p_2,p_3$ are the \textit{momentum components} of $p$. Since
\begin{equation*}
m^2=\doubleab{p}_4^2=p_0^2-\doubleab{\bfp}_3^2
\end{equation*}
we conclude that Einstein's energy formula $p_0^2=m^2+\doubleab{\bfp}_3^2$ holds. The set of particles with mass $m$ is represented by the \textit{mass hyperboloid}
\begin{equation*}
\Gamma _m=\brac{p\in\sscripthat _4\colon\doubleab{p}_4=m}
\end{equation*}
In particular, $\Gamma _0$ gives the set of zero mass particles which we take to be the photons.

We define the \textit{velocity} of a particle $p=(p_0,\bfp )$ with $p_0\ne 0$, to be
\begin{equation*}
\frac{\bfp}{p_0}=\frac{\bfp}{\sqrt{m^2+\doubleab{\bfp}_3^2}}
\end{equation*}
and its \textit{speed} to be $\doubleab{\bfp}_3/p_0$. In particular, the velocity of a photon is $\bfp /\doubleab{\bfp}_3$ and its speed is 1. For example, the photon $(3,2\bfe +2\bff +\bfg )$ has velocity $(2\bfe +2\bff +\bfg )/3$. As another example, the particle $(2,\bfe +\bff )$ has velocity $(\bfe +\bff )/2$ and speed $1/\sqrt{2}$. Notice that the speed $s$ can be written as
\begin{equation*}
s=\frac{\sqrt{p_0^2-m^2}}{p_0}=\sqrt{1-\frac{m^2}{p_0^2}}
\end{equation*}
Of course, $0\le s\le 1$ and $s=1$ if and only if $m=0$. Moreover, $s=0$ if and only if $\doubleab{\bfp}_3=0$. Photons with even moderate energy can have momentum vectors pointing in millions of directions. This is why spacetime appears to be isotropic in all directions.

As discussed in Section~2, the isometries on $\sscript _4$ and $\sscripthat _4$ have the form $\gscript _4=1\times O_3$ where $O_3$ is the set of orthogonal matrices on $\positive _3$. It follows that the indefinite inner product
\begin{equation*}
px=p_0x_0-p_1x_1-p_2x_2-p_3x_3
\end{equation*}
is invariant under $\gscript _4$ for all $p\in\sscripthat _4$, $x\in\sscript _4$. That is $ApAx =px$ for all $A\in\gscript _4$. In the next section an important role will be played by the complex-valued function $e^{i\pi px/2}$. Since $px$ has integer values, it follows that $e^{i\pi px/2}$ only has the four values $\pm 1,\pm i$. Although we cannot define derivatives in this discrete framework, we can still define the operators $\partial\mu$, $\mu =0,1,2,3$, by
\begin{equation*}
\partial _0e^{i\pi px/2}=p_0e^{i\pi px/2},\quad \partial _je^{i\pi px/2}=-p_je^{i\pi px/2},\quad j=1,2,3
\end{equation*}
Similarly, we define
\begin{equation*}
\partial _0e^{-i\pi px/2}=-p_oe^{i\pi px/2},\quad \partial _je^{-i\pi px/2}=p_je^{-i\pi px/2},\quad j=1,2,3
\end{equation*}
Thus, $\partial _\mu e^{i\pi px/2}$ and $\partial _\mu e^{-i\pi px/2}$ have only the values $\pm p_\mu ,\pm ip _{\mu '}$.
We assume that the operators $\partial _\mu$ are linear. Another important discrete operator is
\begin{equation*}
\square =\partial _0^2-\partial _1^2-\partial _2^2-\partial _3^2
\end{equation*}
If $p\in\Gamma _m$ we have a simple version of the Klein-Gordon equation
\begin{equation*}
\square\,e^{\pm i\pi px/2}=m^2e^{\pm i\pi px/2}
\end{equation*}

The \textit{Pauli matrices} are given by
\begin{equation*}
\sigma _0=\begin{bmatrix}1&0\\0&1\end{bmatrix},\quad
\sigma _1=\begin{bmatrix}0&1\\1&0\end{bmatrix},\quad
\sigma _2=\begin{bmatrix}0&-i\\i&0\end{bmatrix},\quad
\sigma _3=\begin{bmatrix}1&0\\0&-1\end{bmatrix}
\end{equation*}
These matrices are self-adjoint and $\sigma _1,\sigma _2,\sigma _3$ have eigenvalues $\pm 1$ with corresponding eigenvectors
\begin{equation*}
\frac{1}{\sqrt{2}}\begin{bmatrix}1\\1\end{bmatrix},\ \frac{1}{\sqrt{2}}\begin{bmatrix}1\\-1\end{bmatrix};\quad
\frac{1}{\sqrt{2}}\begin{bmatrix}1\\i\end{bmatrix},\ \frac{1}{\sqrt{2}}\begin{bmatrix}1\\-i\end{bmatrix};\quad
\begin{bmatrix}1\\0\end{bmatrix},\ \begin{bmatrix}0\\1\end{bmatrix}
\end{equation*}
We define the 2-\textit{dimensional mass operator} by
\begin{align*}
p\sigma&=p_0\sigma _0-\bfsigma\ctimes\bfp =p_0\sigma _0-p_1\sigma _1-p_2\sigma _2-p_3\sigma _3\\
  &=\begin{bmatrix}p_0-p_3&-p_1+ip_2\\-p_1-ip_2&p_0+p_3\end{bmatrix}
\end{align*}
We have that $p\sigma$ is self-adjoint with eigenvalues $p_0\pm\doubleab{\bfp}_3$, determinant $m^2$ when
$p\in\Gamma _m$ and corresponding eigenvectors
\begin{equation*}
\sqrt{\frac{\doubleab{\bfp}_3\pm p_0}{2\doubleab{\bfp}_3}}
\begin{bmatrix}1\\\frac{p_1+ip_2}{\pm\doubleab{\bfp}_3+p_0}\end{bmatrix}
\end{equation*}

This mass operator is not entirely satisfactory because it's eigenvalues are not directly related to the mass and it is only the determinant that gives $m^2$. This operator was originally intended to describe spin $1/2$ particles. We now discuss a more satisfactory mass operator which is 4-dimensional. It was originally introduced by Dirac to describe spin $1/2$ particles and anti-particles. W define the $4\times 4$ $\gamma$-matrices by
\begin{equation*}
\gamma _0=\begin{bmatrix}\sigma _0&0\\0&-\sigma _0\end{bmatrix},\quad
\gamma _j=\begin{bmatrix}0&\sigma _j\\-\sigma _j&0\end{bmatrix},\ \quad j=1,2,3
\end{equation*}
The 4-\textit{dimensional mass operator} is defined as
\begin{align*}
\mscript _p&=p\gamma =p_0\gamma _0-p_1\gamma _1-p_2\gamma _2-p_3\gamma _3
  =\begin{bmatrix}p_0\sigma _0&-\bfp\ctimes\bfsigma\\\bfp\ctimes\bfsigma&-p_0\sigma _0\end{bmatrix}\\
  \noalign{\smallskip}
  &=\begin{bmatrix}p_0&0&-p_3&-p_1+ip_2\\ 0&p_0&-p_1-ip_2&p_3\\
  p_3&p_1-ip_2&-p_0&0\\ p_1+ip_2&-p_2&0&-p_0\end{bmatrix}
\end{align*}
Since $\mscript _p$ described particles and antiparticles, it is not self-adjoint. The reason for the minus signs in the
$\gamma$-matrices is that the antiparticles are space inversion (and negation of charges) copies of particles. The space inversion is also evident in our group theoretic description $-A,A\in O'_3$, for antiparticles. Notice that
\begin{align*}
(\bfp\ctimes\bfsigma )^2
  &=\begin{bmatrix}p_3&p_1-ip_2\\p_1+ip_2&-p_3\end{bmatrix}^2=\doubleab{\bfp}_3^2\sigma _0\\
\intertext{and hence,}
\mscript _p^2&=\begin{bmatrix}p_0\sigma _0&-\bfp\ctimes\bfsigma\\\bfp\ctimes\bfsigma&-p_0\sigma _0\end{bmatrix}^2
=\begin{bmatrix}p_0^2\sigma _0-(\bfp\ctimes\bfsigma )^2&0\\0&p_0^2\sigma _0-(\bfp\ctimes\bfsigma )^2\end{bmatrix}\\
  \noalign{\medskip}
  &=\begin{bmatrix}\paren{p_0^2-\doubleab{\bfp}_3^2}\sigma _0&0\\
  0&\paren{p_0^2-\doubleab{\bfp}_3^2}\sigma _0\end{bmatrix}
\end{align*}

We conclude that if $p\in\Gamma _m$ then $\mscript _p^2=m^2I$. This indicates that the eigenvalues of $\mscript _p$ are $\pm m$ when $p\in\Gamma _m$. This is, in fact the case and the eigenvectors corresponding to eigenvalue $m$ are
\begin{equation}         % equation (3.1)
\label{eq31}
u_1=\sqrt{\frac{p_0+m}{2p_0}}\begin{bmatrix}1\\0\\\frac{p_3}{p_0+m}\\\noalign{\medskip}
  \frac{p_1+ip_2}{p_0+m}\\\noalign{\smallskip}\end{bmatrix},\qquad
u_2=\sqrt{\frac{p_0+m}{2p_0}}\begin{bmatrix}0\\1\\\frac{p_1-ip_2}{p_0+m}\\\noalign{\medskip}
  \frac{-p_3}{p_0+m}\\\noalign{\smallskip}\end{bmatrix}
\end{equation}
These eigenvectors correspond to spin $1/2$ particles with spin up and spin down, respectively. Notice that
$\elbows{u_1,u_2}=0$. The eigenvectors corresponding to eigenvalue $-m$ are
\begin{equation}         % equation (3.2)
\label{eq32}
u_3=\sqrt{\frac{p_0+m}{2p_0}}\begin{bmatrix}\noalign{\smallskip}\frac{p_3}{p_0+m}\\\noalign{\medskip}
  \frac{p_1+ip_2}{p_0+m}\\1\\0\end{bmatrix},\qquad
u_4=\sqrt{\frac{p_0+m}{2p_0}}\begin{bmatrix}\noalign{\smallskip}\frac{p_1-ip_2}{p_0+m}\\\noalign{\medskip}
  \frac{-p_3}{p_0+m}\\0\\1\end{bmatrix}
\end{equation}
These eigenvectors correspond to spin $1/2$ antiparticles with spin up and down, respectively. Again, we have that
$\elbows{u_3,u_4}=0$. Notice that for a particle at rest when $\bfp =0$ we have that $\mscript _p=m\gamma _0$ and the eigenvectors become $u_1=d$, $u_2=e$, $u_3=f$, $u_4=g$. In general, the equations $\mscript _pu=\pm mu$ are discrete versions of Dirac's equations.

\section{Discrete Quantum Field Theory} % Section 4
This section treats discrete quantum field theory for spin 0 and spin 1 bosons and fermions with spin $1/2$. These cover most of the important cases and others can be treated in a similar way \cite{ps95,vel94}. We begin with the simplest case which is a spin 0 boson with mass $m$. Let $K$ be a complex Hilbert space whose unit vectors are the states of the system. These states represent the possible superpositions of various combinations of mass $m$, spin 0 bosons each having a characteristic energy-momentum $p\in\Gamma _m$. The basic operators are the annihilation and creation operators $a(p)$, $a(p)^*$, $p\in\Gamma _m$. The operator $a(p)$ represents the annihilation of a boson with
$p\in\Gamma _m$ and its adjoint $a(p)^*$ the creation of a boson with $p\in\Gamma _m$. These operators satisfy the commutation relations \cite{ps95,vel94}
\begin{align*}
\sqbrac{a(p),a(q)^*}&=a(p)a(q)^*-a(q)^*a(p)=\partial _{pq}I\\
\sqbrac{a(p),a(q)}&=\sqbrac{a(p)^*,a(q)^*}=0
\end{align*}
A \textit{free boson quantum field} is a map $\gamma$ from $\sscript _4$ into the set of (unbounded) operators on $K$ of the form
\begin{equation}         % equation (4.1)
\label{eq41}
\gamma (x)=\sum _{p\in\Gamma _m}\frac{1}{p_0}\sqbrac{f(p)a(p)e^{i\pi px/2}+g(p)a(p)^*e^{-i\pi px/2}}
\end{equation}
where $f,g\colon\Gamma _m\to\complex$. The most common example is when $f(p)=g(p)=1$ for all $p\in\Gamma _m$. Notice that $\gamma (x)$ is self-adjoint when $g(p)=\overline{f(p)}$ for all $p\in\Gamma _m$

If $\gamma$ is given by \eqref{eq41} and $A\in\gscript _4$ we have the boson field
\begin{equation*}
U(A)\gamma (x)=\gamma (A^*x)
\end{equation*}
This gives a group representation of $\gscript _4$ because
\begin{equation*}
U(AB)\gamma (x)=\gamma (B^*A^*x)=U(B)\gamma (A^*x)=U(A)U(B)\gamma (x)
\end{equation*}
so $U(AB)=U(A)U(B)$. An explicit expression for $U(A)\gamma (x)$ is
\begin{align*}
U(A)\gamma (x)&=\gamma (A^*x)=\sum _{p\in\Gamma _m}\frac{1}{p_0}
  \sqbrac{f(p)a(p)e^{i\pi pA^*x/2}+g(p)a(p)^*e^{-i\pi p A^*x/2}}\\
  &=\sum _{p\in\Gamma _m}\frac{1}{p_0}\sqbrac{f(p)a(p)e^{i\pi Apx/2}+g(p)a(p)^*e^{-i\pi Apx/2}}\\
  &\sum _{p\in\Gamma _m}\frac{1}{p_0}\sqbrac{f(A^*p)a(A^*p)e^{i\pi px/2}+g(A^*p)a(A^*p)^*e^{-i\pi px/2}}
\end{align*}

Suppose $\phi$ is the quantum field
\begin{equation*}
\phi (x)=\sum _{p\in\Gamma _m}\frac{1}{p_0}\sqbrac{a(p)e^{i\pi px/2}+a(p)^*e^{-i\pi px/2}}
\end{equation*}
We can form the quantum fields $\partial _\mu\phi (x)$, $\mu =0,1,2,3$. Thus,
\begin{align*}
\partial _0\phi (x)&\sum _{p\in\Gamma _m}\sqbrac{a(p)e^{i\pi px/2}-a(p)^* e^{i\pi px/2}}\\
  \partial _j\phi (x)&=-\sum _{p\in\Gamma _m}\frac{p_j}{p_0}\sqbrac{a(p)e^{i\pi px/2}-a(p)^*e^{-i\pi px/2}}
\end{align*}
It follows from the Klein-Gordon equation of Section~3 that
\begin{equation}        % equation (4.2)
\label{eq42}
\paren{\square -m^2}\phi (x)=0
\end{equation}
We call \eqref{eq42} the \textit{free-field equation} or the \textit{equation of motion}. For an interacting field, the equation of motion becomes
\begin{equation}        % equation (4.3)
\label{eq43}
\paren{\square -m^2}\phi (x)=j_\phi (x)
\end{equation}
where $j_\phi (x)$ is the \textit{current} for $\phi$. The current depends on the interaction experienced by the field. For example, suppose we have interacting bosons, one with mass $m$ as given before and another with mass $M$. The first has equation of motion \eqref{eq43} and the second has equation of motion
\begin{equation*}
\paren{\square -M^2}\sigma (x)=j_\sigma (x)
\end{equation*}
where $\sigma$ is the free quantum field
\begin{equation*}
\sigma (x)=\sum _{k\in\Gamma _M}\frac{1}{k_0}\sqbrac{a(k)e^{i\pi kx/2}+a(k)^*e^{-i\pi kx/2}}
\end{equation*}
and $\sqbrac{\phi (x),\sigma (y)}=0$ for $x,y\in\sscript _4$. The \textit{interaction Hamiltonian density} is gotten from the current by $\hscript (x)=j_\phi (x)\phi (x)$. For example a typical current is $j_\phi (x)=g\sigma (x)\phi (x)$ where $g$ is called the \textit{coupling constant}. We then have that \cite{vel94}
\begin{equation*}
\hscript (x)=g\sigma (x)\phi (x)^2
\end{equation*}
Now $\hscript$ should not depend on the particular equation of motion so we should also have that
$\hscript (x)=j_\sigma (x)\sigma (x)$. It follows that $j_\sigma (x)=g\phi (x)^2$. The equations of motion become
\begin{align*}
\paren{\square -m^2}\phi (x)&=g\sigma (x)\phi (x)\\
\paren{\square -M^2}\sigma (x)&=g\phi (x)^2
\end{align*}

Once we obtain the interaction Hamiltonian density $\hscript (x)$ we define the \textit{interaction Hamiltonian} $H(x_0)$ as follows. Letting
\begin{equation*}
V(x_0)=\ab{\brac{x\colon\doubleab{\bfx}_3\le x_0}}
\end{equation*}
be the cardinality of the set in brackets (called the \textit{space-volume}) we define
\begin{equation*}
H(x_0)=\frac{1}{V(x_0)}\sum\brac{\hscript (x_0,\bfx )\colon\doubleab{\bfx}_3\le x_0}
\end{equation*}
The self-adjoint operators $H(x_0)$ describe an interaction as a function of time $x_0=0,1,2,\ldots\,$. The main contact with observation is given by the corresponding \textit{scattering operators} $S(x_0)$ which satisfy the ``second quantization'' equation
\begin{equation}        % equation (4.4)
\label{eq44}
\nabla _{x_0}S(x_0)=iH(x_0)S(x_0)
\end{equation}
Of course, \eqref{eq44} is a generalization of Schr\"odinger's equation and in this discrete framework $\nabla _{x_0}$ is the difference operator
\begin{equation*}
\nabla _{x_0}S(x_0)=S(x_0+1)-S(x_0)
\end{equation*}
Starting with the initial condition $S(0)=I$ we obtain from \eqref{eq44} that
\begin{equation}        % equation (4.5)
\label{eq45}
S(n)=\sqbrac{I+iH(n-1)}\sqbrac{I+iH(n-2)}\cdots\sqbrac{I+iH(1)}\sqbrac{I+iH(0)}
\end{equation}
For $n\ne 0$, $S(n)$ is not unitary in general. However, presumably the \textit{limiting scattering operator}
$S=\lim\limits _{n\to\infty}S(n)$ should be unitary in order to preserve probability. The author has also considered another approach called reconditioning that essentially renormalizes at each step to maintain probability \cite{gud17}. If we multiply \eqref{eq45} out, we obtain the useful form
\begin{align}        % equation (4.6)
\label{eq46}
S(n)&=I+i\sum _{j=0}^{n-1}H(j)+i^2\sum _{j_2<j_1}^{n-1}H(j_1)H(j_2)
  +i^3\sum _{j_3<j_2<j_1}^{n-1}H(j_1)H(j_2)H(j_3)\notag\\
  &\quad +\cdots +i^nH(n-1)H(n-2)\cdots H(0)
\end{align}
Equation \eqref{eq46} gives a quantum inclusion-exclusion principle and is called a perturbation expansion.

In applications, the scattering operators are employed to find scattering amplitudes and probabilities. For example, suppose we have two particles with energy-momentum $p,q\in\Gamma _m$ and we seek the probability that after they interact, they scatter and attain energy-momentum, $p',q'\in\Gamma _m$. The initial and final scattering states in the Hilbert space $K$ are represented by unit vectors $\ket{pq}$ and $\ket{p'q'}$ in $K$, respectively. The \textit{final scattering amplitude} becomes
\begin{equation*}
\bra{p'q'}S\ket{pq}=\lim _{x_0\to\infty}\bra{p'q'}S(x_0)\ket{pq}
\end{equation*}
The corresponding probabilities are given by $\ab{\bra{p'q'}S(x_0)\ket{pq}}^2$ and $\ab{\bra{p'q'}S\ket{pq}}^2$.

We next consider spin $1/2$ fermion fields. Examples of spin $1/2$ particles are electrons, positrons, protons and neutrons. For definiteness, let us assume the fermion is an electron with mass $m$. We define the corresponding quantum field to be
\begin{align}        % equation (4.7)
\label{eq47}
\psi (x)&\sum _{p\in\Gamma _m}\frac{1}{p_0}\left[(a^1(p)u^1(p)+a^2(p)u^2(p))e^{i\pi px/2}\right.\notag\\
  &\qquad \left.+(b^1(p)^*u^3(p)+b^2(p)^*u^4(p))e^{-i\pi px/2}\right]
\end{align}
In \eqref{eq47}, $u^1(p),u^2(p),u^3(p),u^4(p)$ are the eigenvectors of $\mscript _p$ given by \eqref{eq31} and \eqref{eq32}. Moreover, $a^1(p),a^2(p),b^1(p),b^2(p)$ are annihilation operators for spin up electrons, spin down electrons, spin up positrons and spin down positrons, respectively. Of course, $a^1(p)^*,a^2(p)^*,b^1(p)^*,b^2(p)^*$
are the corresponding creation operators. Equation \eqref{eq47} is a four component operator whose components are
\begin{align*}
\psi _1(x)&=\sum _{p\in\Gamma _m}\frac{1}{p_0\sqrt{2p_o(p_0+m)}}
    \biggl[(p_0+m)a^1(p)e^{i\pi px/2}+p_3b^1(p)^*e^{-i\pi px/2}\biggr.\\
    &\hskip 12pc\biggl.+(p_1+ip_2)b^2(p)^*e^{-i\pi px/2}\biggr]\\
\psi _2(x)&=\sum _{p\in\Gamma _m}\frac{1}{p_0\sqrt{2p_o(p_0+m)}}
    \biggl[(p_0+m)a^2(p)e^{i\pi px/2}+(p_1+ip_2)b^1(p)^*e^{-i\pi px/2}\biggr.\\
    &\hskip 12pc\biggl.-p_3b^2(p)^*e^{-i\pi px/2}\biggr]\\
\psi _3(x)&=\sum _{p\in\Gamma _m}\frac{1}{p_0\sqrt{2p_o(p_0+m)}}
    \biggl[p_3a^1(p)e^{i\pi px/2}+(p_1-ip_2)a^2(p)e^{i\pi px/2}\biggr.\\
    &\hskip 12pc\biggl.+(p_0+m)b^1(p)^*e^{-i\pi px/2}\biggr]\\
\psi _4(x)&=\sum _{p\in\Gamma _m}\frac{1}{p_0\sqrt{2p_o(p_0+m)}}
    \biggl(p_1+ip_2)a^1(p)e^{i\pi px/2}-p_3a^2(p)e^{i\pi px/2}\biggr.\\
    &\hskip 12pc\biggl.+(p_0+m)b^2(p)^*e^{-i\pi px/2}\biggr]\\
\end{align*}
The field $\psi$ is not self-adjoint and we introduce the field
\begin{align*}
\psi (x)^*&=\sum _{p\in\Gamma _m}
    \biggl[\paren{a^1(p)^*\overu ^1(p)+a^2(p)^*\overu ^2(p)}e^{-i\pi px/2}\biggr.\\
    &\hskip 6pc\biggl.\paren{b^1(p)\overu ^3(p)+b^2(p)\overu ^4(p)}e^{i\pi px/2}\biggr]\\
\end{align*}
where $\overu ^j(p)$ are the complex-conjugates of $u^j(p)$, $j=1,2,3,4$. The components of $\psi (x)^*$ are
$\psi _j(x)^*$, $j=1,2,3,4$. We see that $\psi (x)$ annihilates electrons and creates positrons, while $\psi (x)^*$ creates electrons and annihilates positrons. We obtain a self-adjoint field by defining $\gamma (x)=\psi (x)+\psi (x)^*$.

Under the symmetry $A\in\gscript _4$, the fermion field transforms according to
\begin{align*}
U(A)\psi (x)&=A\psi (A^*x)\\
  &=\sum _{p\in\Gamma _m}\biggl[\paren{a^1(A^*p)Au^1(A^*p)+a^2(A^*p)Au^2(A^*p)}e^{i\pi px/2}\biggr.\\
    &\hskip 4pc\biggl.+\paren{b^1(A^*p)^*Au^3(A^*p)+b^2(A^*p)^*Au^4(A^*p)}e^{-i\pi px/2}\biggr]\\
\end{align*}
As before, $U$ gives a representation of $\gscript _4$ in the sense that $U(AB)=U(A)U(B)$.

We now introduce the \textit{Dirac operator}
\begin{align*}
\partial =\partial _0\gamma _0-\partial _1\gamma _1-\partial _2\gamma _2-\partial _3\gamma _3
  &=\begin{bmatrix}\noalign{\smallskip}
     \partial _0&0&-\partial _3&-\partial _1+i\partial _2\\\noalign{\smallskip}
     0&\partial _0&-\partial _1-i\partial _2&\partial _3\\\noalign{\smallskip}
     \partial _3&\partial _1-i\partial _2&-\partial _0&0\\\noalign{\smallskip}
     \partial _1+i\partial _2&-\partial _3&0&-\partial _0\\\noalign{\smallskip}\end{bmatrix}\\
\intertext{We have that}
\partial u^1(p)e^{i\pi px/2}&=\sqrt{\frac{p_0+m}{2p_0}}\,\partial
\begin{bmatrix}\noalign{\smallskip}
    e^{i\pi px/2}\\\noalign{\smallskip}
     0\\\noalign{\smallskip}
     \frac{p_3}{p_0+m}\,e^{i\pi px/2}\\\noalign{\medskip}
     \frac{p_1+ip_2}{p_0+m}\,e^{i\pi px/2}\\\noalign{\smallskip}\end{bmatrix}=mu^1(p)e^{i\pi px/2}\\
\intertext{In a similar way.}
     \partial u^2(p)e^{i\pi px/2}&=mu^2(p)e^{i\pi px/2}\\
      \partial u^3(p)e^{-i\pi px/2}&=mu^3(p)e^{-i\pi px/2}\\
     \partial u^4(p)e^{-i\pi px/2}&=mu^4(p)e^{-i\pi px/2}
\end{align*}
It follows that the \textit{Dirac equations} $(\partial -m)\psi (x)=0$ holds. The equation of motion for an interacting field would then have the form $(\partial -m)\psi (x)=j_\psi (x)$ where $j_\psi$ is a vector current. We also define the
\textit{conjugate Dirac operator}
\begin{equation*}
\overpartial =\partial _0\overgamma _0-\partial _1\overgamma _1-\partial _2\overgamma _2-\partial _3\overgamma _3
  =\begin{bmatrix}\noalign{\smallskip}
     \partial _0&0&-\partial _3&-\partial _1-i\partial _2\\\noalign{\smallskip}
     0&\partial _0&-\partial _1+i\partial _2&\partial _3\\\noalign{\smallskip}
     \partial _3&\partial _1+i\partial _2&-\partial _0&0\\\noalign{\smallskip}
     \partial _1-i\partial _2&-\partial _3&0&-\partial _0\\\noalign{\smallskip}\end{bmatrix}
\end{equation*}
and obtain the \textit{conjugate Dirac equation} $(\overpartial +m)\psi (x)^*=0$

We next consider an electromagnetic field. Let $e^j(k)$, $j=1,2,3$, be three mutually orthogonal 4-dimensional unit vectors satisfying
\begin{equation}        % equation (4.8)
\label{eq48}
\sum _{\mu =0}^3k_\mu e_\mu ^j(k)=0
\end{equation}
for every $k\in\Gamma _0$. The $k\in\Gamma _0$ describe (mass zero) photons and the $e^j(k)$, $j=1,2,3$ give the three polarization vectors for (spin~1) photons. The polarization vectors depend on the physical situation. An example is the following. When $k_2=k_3=0$, let
\begin{equation}        % equation (4.9)
\label{eq49}
e^1(k)=\frac{1}{\sqrt{k_0^2+k_1^2}}\,(k_1,-k_0,0,0)
\end{equation}
$e^2(k)=(0,0,1,0)$, $e^3(k)=(0,0,0,1)$. Otherwise, let $e^1(k)$ again be given by \eqref{eq49} and let
\begin{align*}
e^2(k)&\frac{1}{\sqrt{k_2^2+k_3^2}}\,(0,0,-k_3,k_2)\\
e^3(k)&=\frac{1}{N}\paren{k_0(k_2^2+k_3^2),k_1(k_2^2+k_3^2),-k_2(k_0^2+k_1^2),-k_3(k_0^2+k_1^2)}
\end{align*}
where the normalization constant $N=2k_0^2(k_0^2+k_1^2)(k_2^2+k_3^2)$. Corresponding to the polarization vectors $e^j(k)$ we have the three photon annihilation operators $a^j(k)$ and the three photon creation operators $a^j(k)^*$, $j=1,2,3$. The \textit{quantum electromagnetic field} $A_\mu (x)$, $\mu =0,1,2,3$, is given by
\begin{equation}        % equation (4.10)
\label{eq410}
A_\mu (x)=\sum _{k\in\Gamma _0}\frac{1}{k_0}\,\sum _{j=1}^3
  \sqbrac{e _\mu ^j(k)a^j(k)e^{i\pi px/2}+\overe _\mu ^j(k)a^j(k)^*e^{-i\pi px/2}}
\end{equation}
One can give a shorter notation for \eqref{eq410} by eliminating the subscript $\mu$. Notice that $A_\mu (x)$ is self-adjoint. Of course $A_\mu (x)$ is analogous to the electromagnetic vector potential in classical electromagnetism.

In quantum electrodynamics we have the free electron field $\psi$ with equation of motion $(\partial -m)\psi (x)=0$ and the free photon field $A$ with equation of motion $\square\,A_\mu (x)=0$. Applying \eqref{eq48} we obtain the
\textit{Lorentz condition}
\begin{equation*}
\sum _{\mu =0}^3\partial _\mu A_\mu (x)=0
\end{equation*}
The simplest nontrivial vector current for $\psi$ is
\begin{equation*}
j_\psi (x)=-e\sum _{\mu =0}^3A_\mu (x)\gamma _\mu\psi (x)
\end{equation*}
where $-e$ is the electron charge. This gives the interactive equation of motion
\begin{equation*}
(\partial -m)\psi (x)=-e\sum _{\mu =0}^3A_\mu (x)\gamma _\mu\psi (x)
\end{equation*}
The corresponding current for $A(x)$ is
\begin{equation*}
j_A(x)=-e\sum _{\mu =0}^3\psi (x)^*\gamma _\mu\psi (x)
\end{equation*}
and the interaction Hamiltonian density becomes
\begin{equation*}
\hscript (x)=-e\sum _{\mu =0}^3A_\mu (x)\psi (x)^*\gamma _\mu\psi (x)
\end{equation*}
We can now compute the interaction Hamiltonian $H(x_0)$ and the scattering operators $S(x_0)$ as previously discussed. This describes electron-electron scattering where the electrons interact by exchanging photons.

In a similar way, we can describe electron-proton scattering by photon exchange. The free electron field $\psi _e$ satisfies $(\partial -m)\psi _e(x)=0$ and the free proton field satisfies $(\partial -M)\psi _p(x)=0$ where $M$ is the proton mass. The interaction Hamiltonian density becomes
\begin{equation*}
\hscript (x)=e\sum _{\mu =0}^3A_\mu (x)
  \sqbrac{\psi _p(x)^*\gamma _\mu\psi _p(x)-\psi _e(x)^*\gamma _\mu\psi  _e(x)}
\end{equation*}

Finally, we mention the quantum fields for the weak spin~1 bosons $W^-$ and $W^+$ of mass $b$. Let $a^j(k)$ be the annihilation operators for $W^-$ and $b^j(k)$ the annihilation operators for $W^+$ with polarizations $j=1,2,3$. Then the quantum field for $W^-$ is given by
\begin{equation*}
W_\mu ^-(x)=\sum _{k\in\Gamma _b}\frac{1}{k_0}
   \sqbrac{\sum _{j=1}^3e_\mu ^j(k)a^j(k)e^{i\pi kx/2}+\overe _\mu ^j(k)b^j(k)^*e^{-i\pi kx/2}}
\end{equation*}
and $W_\mu ^+(x)=W_\mu ^-(x)^*$.

\end{document}